\newcommand{\tr}{\textrm{Tr}}
\newcommand{\rs}{\textrm{RS}}
\newcommand{\grm}{\textrm{RM}}
\def\F{\Bbb F}
\newcommand{\rmv}[1]{}
\begin{document}
\title*{A repair scheme for a distributed storage system based on multivariate polynomials}
\author{Hiram H. L\'opez\thanks{\hskip -0.15 cm Corresponding author. Hiram H. L\'opez was partially supported by NSF DMS-2401558. Gretchen L. Matthews was partially supported by NSF DMS-2201075 and the Commonwealth Cyber Initiative.}, Gretchen L. Matthews, and Daniel Valvo}
\institute{Hiram H. L\'opez \at Department of Mathematics, Virginia Tech, Blacksburg, VA USA \email{hhlopez@vt.edu},
Gretchen L. Matthews \at Department of Mathematics, Virginia Tech, Blacksburg, VA USA \email{gmatthews@vt.edu}, and
Daniel Valvo \at Department of Mathematics, Virginia Tech, Blacksburg, VA USA \email{vdaniel1@vt.edu}.}
%
%
\maketitle
\abstract{
A distributed storage system stores data across multiple nodes, with the primary objective of enabling efficient data recovery even in the event of node failures. The main goal of an exact repair scheme is to recover the data from a failed node by accessing and downloading information from the rest of the nodes. In a groundbreaking paper, ~\cite{GW} developed an exact repair scheme for a distributed storage system that is based on Reed-Solomon codes, which depend on single-variable polynomials. In these notes, we extend the repair scheme to the family of distributed storage systems based on Reed-Muller codes, which are linear codes based on multivariate polynomials. The repair scheme we propose repairs any single node failure and multiple node failures, provided the positions satisfy certain conditions.}

\section{Keywords}
Reed-Solomon code, Reed-Muller code, repair scheme, distributed storage system

\section{Introduction}
The goal of a distributed storage system is to store data over multiple storage nodes. A linear code, which is a vector space over a finite field, may be used in a distributed storage system setting to allow the information stored on a failed node to be recovered using the information stored on the remaining nodes. The general idea is the following.
\begin{itemize}
\item The information to be stored is encoded into codewords using a linear code.
\item Every codeword is distributed across nodes so that each node stores a symbol.
\item Recovering a failed node exactly is equivalent to fixing an erasure in the codeword \cite{DSS1}, \cite{DSS2}.
\end{itemize}
A repair scheme is an algorithm that recovers the value at any node using limited information from the values at the other nodes. Under certain conditions, some repair schemes require less information than standard approaches to repair. Thus, the mathematical goal is the following.
\begin{itemize}
\item Design a set of vectors in such a way that every entry of every vector can be recovered from the rest of the entries. In these notes, we use evaluation codes, meaning that the set of vectors is a vector space over a finite field, and every vector depends on the evaluation of a certain polynomial.
\item Give an explicit description of an exact repair scheme. In other words, give the algorithm describing how every vector entry can be recovered from the rest of the entries. In these notes, we use the trace function from finite fields to describe the repair scheme.
\end{itemize}

An evaluation code is a linear code defined by evaluating a collection of polynomials on a set of points. Reed-Solomon codes,  the most well-known family of evaluation codes, are defined by the evaluation of single-variable polynomials up to a certain degree on a set of points of the finite field $\mathbb{F}_q$ of size $q$. The design of linear exact repair schemes for distributed storage systems using evaluation codes began with the foundational work of Guruswami and Wootters in which they developed a repair scheme (GW scheme) to efficiently repair an erasure in a Reed-Solomon (RS) code; see~\cite{GW}. The GW scheme highly depends on the dual of a Reed-Solomon code, which is a generalized Reed-Solomon code. For a general framework for evaluation codes, see~\cite{JVV}. For the dual of an evaluation code, see~\cite{Lopez2021}.

The GW scheme inspired the linear exact repair schemes for algebraic geometry codes \cite{JLX} and Reed-Muller codes~\cite{RM}. Similarly to Reed-Solomon codes, the Reed-Muller codes are defined by evaluating polynomials up to a certain degree in $m$ variables on the points $\mathbb{F}_{q^t}^m$. 

In these notes, we develop a repair scheme for several failures on a distributed storage system that is based on Reed-Muller codes, provided the positions satisfy a certain restriction. The approach we develop in these notes, which relies on the dual of an evaluation code~\cite{Lopez2021}, is different from the one used in~\cite{RM}, and it gives the basis to extend it to other families of codes, for instance, the family of Cartesian codes~\cite{Cartesian-codes}.

\section{Preliminaries}
Let $q$ be a power of a prime $p$ and $\mathbb{F}_{q^t}$ be a field extension of degree $t = [\mathbb{F}_{q^t} : \mathbb{F}_q]$ of $\mathbb{F}_q.$ Let $C$ be an $[n,k]$-linear code over $\mathbb{F}_{q^t}$, meaning a $k$-dimensional $\mathbb{F}_{q^t}$-subspace of $\mathbb{F}_{q^t}^n$. The elements of $\mathbb{F}_{q^t}$ are called {\it symbols} and the elements of $\mathbb{F}_q$ are called {\it subsymbols}. As $\mathbb{F}_{q^t}$ is a vector space of dimension $t$ over $\mathbb{F}_{q}$, any codeword $c\in C$ (or more generally any element $w \in \mathbb{F}_{q^t}^n$) can be represented using $n$ symbols or $tn$ subsymbols.

\subsection{Field trace}
The {\it field trace} can be defined as the polynomial  $\tr_{{\mathbb{F}_{q^t}}/\mathbb{F}_q}(x) \in {\mathbb{F}_{q^t}}[x]$ given by
\[\tr_{{\mathbb{F}_{q^t}}/\mathbb{F}_q}(x) := x^{q^{t-1}} + \dots + x^{q}+x.\]
For the sake of convenience, we will often refer to $\tr_{{\mathbb{F}_{q^t}}/\mathbb{F}_q}(x)$ as simply $\tr(x)$ when the extension being used is obvious from context.

The following property of the trace function is crucial to developing the repair scheme.

\begin{remark}\label{21.01.02}
Let $B = \{z_1, \dots, z_t\}$ be a basis for $\mathbb{F}_{q^t}$ over $\mathbb{F}_q$. Then there exists a dual basis $B^\prime = \{z^\prime_1, \dots, z^\prime_t\}$ of $\mathbb{F}_{q^t}$ over $\mathbb{F}_q$, such that $$\tr(z_i z^\prime_j)=\delta_{ij}:=
\begin{cases}
    1 & i=j\\
    0 & \textnormal{otherwise}.
\end{cases}
$$ In this case, $B$ and $B^\prime$ are called {\it dual bases}. For $\alpha \in \mathbb{F}_{q^t}$, \[\alpha = \sum_{i=1}^t \tr(\alpha z_i)z_i^\prime.\]
Thus, given dual bases $B$ and $B'$, determining $\alpha$ is equivalent to finding $\tr(\alpha z_i)$ for all $i \in [t]:=\{ 1, \dots, t \}$; see \cite{Lidl_Niederreiter_1994}.
\end{remark}

\subsection{Reed-Muller codes}
By Reed-Muller codes, we mean the evaluation codes obtained when polynomials in $m$ variables up to a certain total degree $d\in \mathbb{Z}_{\geq 0}$ are evaluated at all the points of $\mathbb{F}_{q^t}^m$. The precise definition is as follows.

Let $\mathbb{F}_{q^t}[x_1, \dots, x_m]$ be the polynomial ring of $m$ variables over $\mathbb{F}_{q^t}.$
Denote by $\mathbb{F}_{q^t}[x_1, \dots, x_m]_{\leq d}$ the set of those polynomials up to a certain total degree $d$.

\begin{definition}
Assume $\{P_1,\ldots,P_n \} = \mathbb{F}_{q^t}^m$, fixing an order on the $n:={q^t}^m$ elements of $\mathbb{F}_{q^t}^m$ . The \textit{Reed-Muller code} of degree $d$ is given by
\[\grm(\mathbb{F}_{q^t}^m, d) := \left\{ f(\mathbb{F}_{q^t}^m) \ : \ f \in \mathbb{F}_{q^t}[x_1, \dots, x_m]_{\leq d} \right\} \subseteq \mathbb{F}_{q^t}^n,\]
where $f(\mathbb{F}_{q^t}^m) := (f(P_1), \dots, f(P_n))$.    
\end{definition}

\begin{definition}\label{24.02.12}
Note that a Reed-Solomon code of dimension $k$ of length $q^t$ is defined as
\[\rs(\mathbb{F}_{q}^m, k) := \grm(\mathbb{F}_{q^t}, k-1).\]
\end{definition}

The dual of a Reed-Muller code, denoted by $\grm(\mathbb{F}_{q^t}^m, d)^\perp$, is the set of all $\alpha\in \mathbb{F}_{q^t}^n$ such that $\alpha \cdot \beta=0$ for all $\beta\in \grm(\mathbb{F}_{q^t}^m, d)$, where $\alpha \cdot \beta$ is the ordinary inner product in $\mathbb{F}_{q^t}^n$. The dual code $\grm(\mathbb{F}_{q^t}^m, d)^\perp$ has been extensively studied in the literature. See, for instance,~\cite{DGM} and ~\cite{Huffman-Pless}, where it is shown that the dual of a $\grm$-code is another $\grm$-code:
\[\grm(\mathbb{F}_{q^t}^m, d)^\perp=\grm(\mathbb{F}_{q^t}^m, d^\perp),\]
where $d^\perp := m(|\mathbb{F}_{q^t}| - 1) - d - 1 = m(q^t - 1) - d - 1$.

\subsection{Exact repair scheme}
In terms of distributed storage systems, each entry $c_i$ of a codeword $c \in C$ represents the information stored on one of $n$ different storage nodes. Informally, when one of the storage nodes fails, meaning that it is unavailable to serve a data request, an exact repair scheme is an algorithm designed to recover the information of the erased node in terms of data held by the other storage nodes. Formally, we have the following definition.
\begin{definition}
Let $C$ be a linear code over $\mathbb{F}_{q^t}$ of length $n$ and dimension $k$, given by a collection of functions $\mathcal{F}$ and a set of evaluation points $A$. A linear exact repair scheme for $C$ over $\mathbb{F}_{q}$ consists of the following.
\begin{itemize}
\item For each $\alpha^* \in A$, and for each $\alpha \in A \setminus \{\alpha^*\}$, a set of queries $Q_\alpha(\alpha^*)\subseteq \mathbb{F}_{q^t}$.
\item For each $\alpha^* \in A$, a linear reconstruction algorithm that computes
\[f(\alpha^*) = \sum_i \lambda_i z_i \]
for coefficients $\lambda_i \in B$ and a basis $\{z_1, \dots, z_t\}$ for $\mathbb{F}_{q^t}$ over $\mathbb{F}_q$ so that the coefficients $\lambda_i$ are $\mathbb{F}_q$-linear combinations of the queries
\[\bigcup_{\alpha \in A \setminus \{\alpha^*\}} \left\{ \tr_{{\mathbb{F}_{q^t}}/\mathbb{F}_q}(\gamma f(\alpha)) : \gamma \in Q_\alpha(\alpha^*) \right\}.\]
\end{itemize}
\end{definition}
We often omit the word linear because we consider only linear exact repair schemes.

The {\it repair bandwidth} $b$ is the number of subsymbols the scheme downloads to recover the erased entry. As any element $c \in \mathbb{F}_{q^t}^n$ depends on $nt$ subsymbols, the number $\frac{b}{nt} $ can be seen as the fraction of the codeword that is needed by the exact repair scheme to recover the erased symbol. It is important to note that to recover a symbol $c_i \in \mathbb{F}_{q^t}$ of a codeword $c$, these $b$ elements of $\mathbb{F}_q$ rely on the entries $c_j$, $j \neq i$, of $c$ but are not necessarily $b$ of them. 
\section{A one-erasure repair scheme of a Reed-Muller code}
In this section, we adapt the GW scheme of one erasure from Reed-Solomon codes to Reed-Muller codes.

\begin{remark}
In \cite{9624941}, the authors developed a repair scheme of one erasure for certain decreasing monomial-Cartesian and augmented Reed-Muller codes. We decided to focus these notes on Reed-Muller codes to determine, in Theorem~\ref{24.02.11}, to what degree a Reed-Muller code can be repaired using ideas similar to~\cite{GW}.
\end{remark}

\begin{theorem}\label{24.02.11}
Let $\grm(\mathbb{F}_{q^t}^m, d)$ be a Reed-Muller code such that $d\leq m(q^t-1)-q^{t-1}$. There exists a repair scheme for one erasure with bandwidth at most
\[b = q^{mt}-1 + (t-1)\left(q^{(m-1)t}-1\right). \]
\end{theorem}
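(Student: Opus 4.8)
The plan is to adapt the Guruswami--Wootters scheme for Reed--Solomon codes by running it along a single coordinate. Fix the erased position $P^* = (\alpha_1^*, \dots, \alpha_m^*) \in \mathbb{F}_{q^t}^m$ and write the codeword as $c = f(\mathbb{F}_{q^t}^m)$ with $\deg f \le d$, so the symbol to recover is $f(P^*)$. First I would invoke Remark~\ref{21.01.02}: if $B = \{z_1, \dots, z_t\}$ is a basis of $\mathbb{F}_{q^t}$ over $\mathbb{F}_q$ with dual basis $B'$, then $f(P^*) = \sum_{i=1}^t \tr(z_i f(P^*))\, z_i'$. This reduces the whole problem to recovering the $t$ subsymbols $\tr(\nu f(P^*))$ as $\nu$ ranges over $B$, each to be produced as an $\mathbb{F}_q$-linear combination of downloaded traces.

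For the core construction, for each $\nu \in B$ I would introduce the repair polynomial
\[
h_\nu(x_1, \dots, x_m) := \frac{\tr(\nu(x_1 - \alpha_1^*))}{x_1 - \alpha_1^*} = \sum_{i=0}^{t-1} \nu^{q^i}\,(x_1 - \alpha_1^*)^{q^i - 1},
\]
a polynomial in $x_1$ alone of degree $q^{t-1}-1$. The key check is that the hypothesis $d \le m(q^t-1) - q^{t-1}$ is exactly equivalent to $q^{t-1} - 1 \le d^\perp = m(q^t-1) - d - 1$, so that $h_\nu(\mathbb{F}_{q^t}^m)$ lies in $\grm(\mathbb{F}_{q^t}^m, d^\perp) = \grm(\mathbb{F}_{q^t}^m, d)^\perp$. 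The orthogonality of dual codewords then gives $\sum_P h_\nu(P) f(P) = 0$, and since $h_\nu(P) = \nu$ whenever the first coordinate of $P$ equals $\alpha_1^*$ (in particular $h_\nu(P^*) = \nu$, the linear coefficient of the trace), I obtain the repair identity
\[
\tr(\nu f(P^*)) = -\sum_{P \ne P^*} \tr\bigl(h_\nu(P) f(P)\bigr).
\]

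Next I would bound the bandwidth by partitioning $\mathbb{F}_{q^t}^m \setminus \{P^*\}$ according to the first coordinate $\alpha_1$ of $P$. When $\alpha_1 \ne \alpha_1^*$, the numerator $\tr(\nu(\alpha_1 - \alpha_1^*))$ lies in $\mathbb{F}_q$, forcing $h_\nu(P) \in \mathbb{F}_q\cdot(\alpha_1 - \alpha_1^*)^{-1}$, a one-dimensional $\mathbb{F}_q$-subspace; hence every trace $\tr(h_\nu(P)f(P))$ is an $\mathbb{F}_q$-multiple of the single downloaded subsymbol $\tr((\alpha_1 - \alpha_1^*)^{-1} f(P))$. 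When $\alpha_1 = \alpha_1^*$ with $P \ne P^*$, instead $h_\nu(P) = \nu$, so $\{h_\nu(P) : \nu \in B\} = B$ spans all of $\mathbb{F}_{q^t}$ and the full symbol ($t$ subsymbols) must be downloaded. Counting the $q^{tm} - q^{t(m-1)}$ points of the first kind against the $q^{t(m-1)} - 1$ of the second yields
\[
b = \bigl(q^{tm} - q^{t(m-1)}\bigr) + t\bigl(q^{t(m-1)} - 1\bigr) = q^{mt} - 1 + (t-1)\bigl(q^{(m-1)t} - 1\bigr),
\]
matching the claim; feeding the recovered $\tr(\nu f(P^*))$ back into Remark~\ref{21.01.02} reconstructs $f(P^*)$.

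The delicate points I expect to need care are the algebraic facts underpinning the construction: verifying that $\tr(\nu(x_1-\alpha_1^*))$ is divisible by $x_1 - \alpha_1^*$ with the stated quotient and leading degree $q^{t-1}-1$, and checking that all exponents appearing stay strictly below $q^t$ so that reduction modulo $x_j^{q^t} - x_j$ does not alter the evaluation vector. The genuine conceptual obstacle, however, is the degenerate hyperplane $\{x_1 = \alpha_1^*\}$: because $h_\nu$ depends only on $x_1$, it carries no repair information across that hyperplane, which is what forces full downloads there and produces the correction term $(t-1)(q^{(m-1)t}-1)$ absent from the Reed--Solomon case $m = 1$. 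Recognizing that this single-coordinate scheme already achieves the stated bound, and that the degree budget $q^{t-1}-1 \le d^\perp$ is precisely the hypothesis, is the heart of the argument.
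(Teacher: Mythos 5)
Your proposal is correct and follows essentially the same route as the paper's proof: the repair polynomials $h_\nu$ are exactly the paper's $r_i(x) = \tr(z_i(x_1-p_1))/(x_1-p_1)$, the degree check $q^{t-1}-1 \le d^\perp$ is the same use of the hypothesis, and your partition by the first coordinate is precisely the paper's hyperplane $\Gamma = \{p_1\}\times\mathbb{F}_{q^t}^{m-1}$, yielding the identical bandwidth count of one subsymbol off the hyperplane and $t$ subsymbols on it. The reconstruction via the dual-basis identity of Remark~\ref{21.01.02} also matches the paper's final step, so there is nothing to correct.
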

\begin{proof}
Let $c:=(f(P_1),\ldots,f(P_n)) \in \grm(\mathbb{F}_{q^t}^m, d)$.  Assume that the entry of $c$ corresponding to $P^*=(p_1,\ldots,p_m) \in \mathbb{F}_{q^t}^m$, meaning $f(P^*)$, has been erased.

Recall that $\{z_1, \dots, z_t\}$ is a basis for $\mathbb{F}_{q^t}$ over $\mathbb{F}_q$. For $i\in [t]$, define the following $t$ polynomials, which we refer to as recovery polynomials:
\begin{align*}
r_i(x) &=
\frac{\tr(z_i(x_1 - p_1))}{(x_1 - p_1)} \\
&= {z_i} +
z_i^q(x_1 - p_1)^{q-1}+ \cdots +
z_i^{q^{t-1}}(x_1 - p_1)^{q^{t-1}-1}.
\end{align*}

As $d\leq m(q^t-1)-q^{t-1}$, then
\begin{align*}
d^\perp &= m(q^t - 1) - d - 1\\
& \geq m(q^t - 1) - (m(q^t-1)-q^{t-1}) -1\\
& = q^{t-1} -1 \geq \deg r_i(x).
\end{align*}
It follows that every polynomial $r_i(x)$ defines an element in $\grm(\mathbb{F}_{q^t}^m, d)^\perp$, 
meaning  $r_i(\mathbb{F}_{q^t}^m) \in \grm(\mathbb{F}_{q^t}^m, d)^\perp$.

As a consequence, we obtain the $t$ equations
\begin{equation}\label{21.06.15}
r_{i}(P^*)f(P^*)= -\sum_{\mathbb{F}_{q^t}^m \setminus\{P^*\}}
 r_{i}(P)f(P)  \quad \forall i\in[t].
\end{equation}
As $r_i(P^*)=z_i,$ applying the trace function to both sides of the previous equations and employing the linearity of the trace function, we obtain
\[\tr \left(z_i  f(P^*)\right)= 
-\sum_{\mathbb{F}_{q^t}^m \setminus\{P^*\}}
\tr \left(  r_{i}(P)f(P) \right) \quad \forall i\in[t].\]

Define the set $ \Gamma := \{p_1\} \times \mathbb{F}_{q^t} \times \cdots \times \mathbb{F}_{q^t} \subset \mathbb{F}_{q^t}^m.$ Then,
$$
r_i(P)= \begin{cases}
   z_i & \textnormal{if }  P \in \Gamma \\ \ \\
 \frac{ \tr(z_i(s_P - p_1))}{(s_P - p_1)}   & \textnormal{if }   {P \notin \Gamma},
\end{cases}
$$
where $s_P$ is the first entry of the point $P \in \mathbb{F}_{q^t}^m$. Note $P^* \in \Gamma$.
Therefore, we obtain that for $i\in[t],$
$$
\begin{array}{ll}
\tr \left(z_i  f(P^*)\right)&=\sum_{\mathbb{F}_{q^t}^m \setminus\{P^*\}} \tr \left(  r_{i}(P)f(P) \right)\\ \ \\
&= \sum_{\Gamma \setminus\{P^*\}} \tr \left(  r_{i}(P)f(P) \right) + 
\sum_{\mathbb{F}_{q^t}^m \setminus \Gamma} \tr \left(  r_{i}(P)f(P) \right) \\ \ \\
& = \sum_{\Gamma \setminus\{P^*\}} \tr \left(  z_i f(P) \right) + 
\sum_{\mathbb{F}_{q^t}^m \setminus \Gamma} \tr \left(  \frac{\tr(z_i(s_P - p_1))}{s_P - p_1}f(P) \right)\\ \ \\
& = \sum_{\Gamma \setminus\{P^*\}} \tr \left(  z_i f(P) \right) + 
\sum_{\mathbb{F}_{q^t}^m \setminus \Gamma} \tr(z_i(s_P {-} P_1)) \tr \left( \frac{ f(P)}{s_P {-} p_1} \right).
\end{array}
$$
The properties of the trace function imply that the entry $f(P^*)$ can be recovered from its $t$ independent traces $\tr (z_i  f(P^*)),$ which can be obtained by downloading the following information from every element $P \neq P^*$:
\begin{itemize}
\item if $P \in \Gamma \setminus\{P^*\}$,  download $f(P)$. 
\item if $P \notin \Gamma$, { download } $\displaystyle \tr\left(\frac{ f(P)}{ s_P - p_1}\right)$. 
\end{itemize}
Hence, the bandwidth is 
\begin{align*}
b & = t (|\Gamma|-1) + |\mathbb{F}_{q^t}^m\setminus \Gamma| = t (q^{(m-1)t}-1) + q^{mt} - q^{(m-1)t}\\
& = tq^{(m-1)t} - t + q^{mt} - q^{(m-1)t}\\
& = (t-1)q^{(m-1)t} - t + q^{mt}\\
&= q^{mt}-1 + (t-1)\left(q^{(m-1)t}-1\right),
\end{align*}
which concludes the proof.
\end{proof}

As a consequence of Theorem~\ref{24.02.11}, when $m=1$, we obtain the GW repair scheme for Reed-Solomon codes.
\begin{corollary}
Let $\grm(\mathbb{F}_{q^t}, d)$ be a Reed-Muller code such that $d\leq q^t-q^{t-1}-1$, meaning a Reed-Solomon code. Then, there exists a repair scheme for one erasure with bandwidth at most
\[b = q^t-1. \]
\end{corollary}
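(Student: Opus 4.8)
The plan is to obtain this statement directly as the $m=1$ specialization of Theorem~\ref{24.02.11}. First I would invoke Definition~\ref{24.02.12}, which identifies a single-variable Reed-Muller code $\grm(\mathbb{F}_{q^t}, d)$ with a Reed-Solomon code; this is why the hypothesis can be phrased equivalently in either language, and it places us squarely in the setting of the theorem once we put $m=1$.

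Next I would verify that the degree restriction collapses correctly. Setting $m=1$ in the bound $d \leq m(q^t-1) - q^{t-1}$ yields $d \leq (q^t-1) - q^{t-1} = q^t - q^{t-1} - 1$, which is exactly the hypothesis of the corollary. Consequently the code satisfies the conditions of Theorem~\ref{24.02.11}, and the theorem guarantees a repair scheme for one erasure.

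Then I would evaluate the bandwidth formula at $m=1$. Substituting into $b = q^{mt} - 1 + (t-1)(q^{(m-1)t} - 1)$ gives $b = q^t - 1 + (t-1)(q^0 - 1) = q^t - 1$, because $q^{(m-1)t} = q^0 = 1$ forces the correction term to vanish. This is precisely the claimed bandwidth.

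There is no genuine obstacle, since the corollary is a clean specialization; the only point worth recording is why the term $(t-1)(q^{(m-1)t} - 1)$ disappears. Geometrically, when $m=1$ the fiber $\Gamma = \{p_1\} \times \mathbb{F}_{q^t} \times \cdots \times \mathbb{F}_{q^t}$ degenerates to the single point $\{P^*\}$, so $\Gamma \setminus \{P^*\}$ is empty and no full symbols $f(P)$ need be downloaded; every surviving node lies outside $\Gamma$ and contributes exactly one subsymbol, namely $\tr\left(f(P)/(s_P - p_1)\right)$, for a total of $q^t - 1$ subsymbols. This recovers the GW scheme for Reed-Solomon codes.
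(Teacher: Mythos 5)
Your proposal is correct and follows exactly the paper's route: the paper likewise proves the corollary by citing Definition~\ref{24.02.12} and specializing Theorem~\ref{24.02.11} to $m=1$, though it omits the explicit checks you carry out (the collapse of the degree bound to $d \leq q^t - q^{t-1} - 1$, the vanishing of the term $(t-1)(q^{(m-1)t}-1)$, and the degeneration of $\Gamma$ to $\{P^*\}$). Your added verifications are accurate and only make the specialization more transparent.
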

\begin{proof}
By Definition~\ref{24.02.12}, $\rs(\mathbb{F}_{q}^m, k) = \grm(\mathbb{F}_{q^t}, k-1)$. So, we obtain the result as a consequence of Theorem~\ref{24.02.11}.
\end{proof}

\section{A two-erasures repair scheme of a Reed-Muller code}
In this section, we adapt the GW scheme of one erasure from Reed-Solomon codes to two erasures on a Reed-Muller code.

We keep the same notation as in previous sections and develop a repair scheme that repairs two simultaneous erasures $f(\bm{s^{\prime}})$ and $f(\bm{s}^*)$ on a distributed storage system based on a Reed-Muller code $\grm(\mathbb{F}_{q^t}, d)$ provided the erasure positions satisfy a certain condition.

\begin{remark}
In \cite{9624941}, the authors developed a repair scheme of two erasures for certain decreasing monomial-Cartesian and augmented Reed-Muller codes. We focused these notes on Reed-Muller codes to determine in Theorem~\ref{24.02.11} to which degree a Reed-Muller code can be repaired using ideas similar to~\cite{GW}.
\end{remark}

\begin{theorem}\label{21.06.16}
Let $c = (f(\bm{s}_1),\ldots,f(\bm{s}_n)) \in \grm(\mathbb{F}_{q^t}^m, d)$, where $\grm(\mathbb{F}_{q^t}^m, d)$ is a Reed-Muller code with $d\leq m(q^t-1)-q^{t-1}$ and $n=q^{tm}$. Assume that $c$ has the two erasures $f(\bm{s^{\prime}})$ and $f(\bm{s}^*)$, where $\bm{s}^\prime = \left(s_1^\prime,\ldots,s_n^\prime \right)$ and $\bm{s}^* = \left(s_1^*,\ldots, s_n^* \right)$. If there is $j \in n$ such that $s^{\prime}_j - s^*_j\in \mathbb{F}_q^*$, then there exists a repair scheme for the two erasures with bandwidth at most
\[b = 2\left[ n-2 + (t-1)\left(q^{(m-1)t}-2\right)\right].\]
\end{theorem}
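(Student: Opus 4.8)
The plan is to mirror the one-erasure argument of Theorem~\ref{24.02.11}, but to anchor two families of recovery polynomials, one at each erased point. After relabeling coordinates I may assume the distinguished index is $j=1$, and I set $\theta := s_1' - s_1^* \in \mathbb{F}_q^*$. For each $i \in [t]$ I define
\[
r_i^*(x) = \frac{\tr(z_i(x_1 - s_1^*))}{x_1 - s_1^*}, \qquad r_i'(x) = \frac{\tr(z_i(x_1 - s_1'))}{x_1 - s_1'},
\]
exactly as in the one-erasure proof but centered at $s_1^*$ and $s_1'$ respectively. Each has degree $q^{t-1}-1 \le d^\perp$, so the identical degree computation shows that every $r_i^*(\mathbb{F}_{q^t}^m)$ and $r_i'(\mathbb{F}_{q^t}^m)$ lies in $\grm(\mathbb{F}_{q^t}^m,d)^\perp$, and is therefore orthogonal to $c$.

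The decisive point, and the place where the hypothesis $s_1' - s_1^* \in \mathbb{F}_q^*$ is used, is the evaluation of each family at the \emph{opposite} erased point. As before $r_i^*(\bm{s}^*) = z_i$ and $r_i'(\bm{s}') = z_i$, while the cross evaluations are $r_i^*(\bm{s}') = \tr(z_i\theta)/\theta$ and $r_i'(\bm{s}^*) = \tr(-z_i\theta)/(-\theta)$. Because $\theta \in \mathbb{F}_q^*$ the trace is $\mathbb{F}_q$-linear in $\theta$, so both cross values collapse to $\tr(z_i) \in \mathbb{F}_q$; this membership in $\mathbb{F}_q$ is precisely what the condition buys, since it lets the outer trace factor through the coupling term. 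Writing out the two orthogonality relations and applying $\tr$, the linearity of the trace yields, for every $i \in [t]$,
\[
\tr(z_i f(\bm{s}^*)) + \tr(z_i)\,\tr(f(\bm{s}')) = R_i^*, \qquad \tr(z_i f(\bm{s}')) + \tr(z_i)\,\tr(f(\bm{s}^*)) = R_i',
\]
where $R_i^*$ and $R_i'$ are the contributions of the surviving nodes, which are downloadable.

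This is a system of $2t$ equations over $\mathbb{F}_q$ in the $2t$ unknowns $\tr(z_i f(\bm{s}^*))$ and $\tr(z_i f(\bm{s}'))$; by Remark~\ref{21.01.02}, recovering the two erased symbols is equivalent to solving it. The main obstacle is establishing that this system is nonsingular. I would exploit its block structure: the diagonal blocks are identities, and the off-diagonal coupling is governed by the single vector $(\tr(z_1),\dots,\tr(z_t))$ together with the expansion of $1$ in the basis $\{z_i\}$, so the whole determinant reduces to a scalar quantity built from $\tr(1)$. Concretely, taking the appropriate $\mathbb{F}_q$-combination of the equations isolates the two aggregates $\tr(f(\bm{s}^*))$ and $\tr(f(\bm{s}'))$ into a small decoupled subsystem, after which each individual unknown is read off from the diagonal part; the hypothesis on $\theta$ is exactly what keeps every coupling coefficient inside $\mathbb{F}_q$ so that this reduction is legitimate. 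Carrying out the nonsingularity verification of this reduced subsystem under the stated hypotheses is the crux of the argument and the step I expect to demand the most care.

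Finally, the bandwidth is obtained by bookkeeping the two recoveries run in parallel. On the hyperplane $\{x_1 = s_1^*\}$ the polynomials $r_i^*$ reduce to the constants $z_i$, which forces a full symbol ($t$ subsymbols) at each surviving point of that hyperplane, and symmetrically for $\{x_1 = s_1'\}$; off these hyperplanes the recovery for $\bm{s}^*$ contributes the single subsymbol $\tr(f(P)/(s_P - s_1^*))$ and the recovery for $\bm{s}'$ the single subsymbol $\tr(f(P)/(s_P - s_1'))$, precisely as in the one-erasure analysis. Summing the two recoveries, each of which matches the one-erasure count with the other erased position excluded from its download set, and simplifying gives the stated upper bound $b = 2\left[\,n-2 + (t-1)\left(q^{(m-1)t}-2\right)\,\right]$, the factor of two reflecting the cost of running the erasure-$\bm{s}^*$ and erasure-$\bm{s}'$ schemes together.
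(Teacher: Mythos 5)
Your setup is sound and follows the paper's strategy up to a point: two families of dual-code recovery polynomials of degree $q^{t-1}-1\le d^\perp$, anchored at the two erased points, with the hypothesis $s_j'-s_j^*\in\mathbb{F}_q^*$ used exactly as you say, to force the cross-evaluations to equal $\tr(z_i)\in\mathbb{F}_q$ so that the outer trace factors through the coupling terms. The gap is the step you yourself flagged as the crux: nonsingularity of the $2t\times 2t$ system. It is not merely unverified --- it is false in general. Writing $u_i=\tr(z_i f(\bm{s}^*))$, $v_i=\tr(z_i f(\bm{s}'))$, $a_i=\tr(z_i)$, and $1=\sum_i c_i z_i$ with $c_i\in\mathbb{F}_q$, your system in matrix form is
\begin{equation*}
\begin{pmatrix} I & ac^{T} \\ ac^{T} & I \end{pmatrix}
\begin{pmatrix} u \\ v \end{pmatrix}
=
\begin{pmatrix} R^{*} \\ R' \end{pmatrix},
\end{equation*}
whose determinant is $1-(c^{T}a)^2=1-\tr(1)^2$, and $\tr(1)=t\bmod p$. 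So the system is singular precisely when $t\equiv\pm 1\pmod p$ --- for instance $q=2$, $t=3$, a perfectly ordinary parameter choice. The failure is information-theoretic, not an artifact of the algebra: in the singular case the solution set is a coset of a one-dimensional kernel, and using $\sum_i \tr(z_i)z_i'=1$ one checks that the downloaded data determines the pair only up to $\left(f(\bm{s}^*)+\epsilon,\; f(\bm{s}')-\tr(1)\epsilon\right)$ for arbitrary $\epsilon\in\mathbb{F}_q$. No choice of basis rescues the symmetric construction, since $\tr(1)$ is basis-independent.

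The paper avoids this degeneracy by breaking the symmetry between the two families, and this is the idea your proposal is missing. It chooses the basis so that $z_1,\ldots,z_{t-1}$ span $\ker\tr$ (hence $\tr(z_i)=0$ for $i<t$) and multiplies the family anchored at $\bm{s}^*$ by a fixed nonzero $\tau\in\ker\tr$, i.e.\ $p_i(\bm{x})=\tau\,\tr(z_i(x_j-s_j^*))/(x_j-s_j^*)$, keeping $q_i$ as you have it. Then every coupling coefficient vanishes except in the two $i=t$ equations, and the system becomes triangular rather than a coupled block system: the $q$-equations with $i<t$ give $\tr(z_i f(\bm{s}'))$ outright; since $\tau\in\mathrm{span}\{z_1,\ldots,z_{t-1}\}$, these determine $\tr(\tau f(\bm{s}'))$, which unlocks the $i=t$ equation of the $p$-family; then all $t$ traces $\tr(\tau z_i f(\bm{s}^*))$ are known, giving $\tau f(\bm{s}^*)$ by the dual basis and hence $f(\bm{s}^*)$; finally $\tr(f(\bm{s}^*))$ unlocks the last $q$-equation and $f(\bm{s}')$ follows. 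You should replace your symmetric system with this twisted one (or else add the hypothesis $t\not\equiv\pm 1\pmod p$, which proves a strictly weaker theorem). Your bandwidth bookkeeping survives the fix essentially unchanged, since multiplying by $\tau$ only changes the queried subsymbol from $\tr\left(f(\bm{s})/(s_P-s_j^*)\right)$ to $\tr\left(\tau f(\bm{s})/(s_P-s_j^*)\right)$, not the count.
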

\begin{proof}
Note that the kernel of the trace $\ker \tr =  \left\{\alpha \in \F_{q^t} : \tr \left(\alpha   \right) = 0 \right\}$ has dimension $t-1$ as an $\mathbb{F}_q$-vector space. Let $\{z_1, \dots, z_{t - 1}\}$ be an $\mathbb{F}_q$-basis for $\ker \tr$ and $z_t$ an element in $\F_{q^t}$ such that $\{z_1, \dots, z_{t - 1}, z_t\}$ is an $\mathbb{F}_q$-basis for $\F_{q^t}.$ Finally, let $\tau \in \ker \tr.$ 
We are ready to define the repair polynomials. For $i \in [t]$, take
\[p_i(\bm{x}) = \tau\frac{\tr \left(z_i (x_j - s^*_j )\right)}{ \left(x_j - s^*_j \right)}
\qquad \text{ and } \qquad
q_i(\bm{x}) = \frac{\tr\left(z_i (x_j - s^{\prime}_j )\right)}{ (x_j - s^{\prime}_j )}.\]
As $d\leq m(q^t-1)-q^{t-1}$, the polynomials $p_i(\bm{x})$ and $q_i(\bm{x})$ define elements in the dual code $\grm(\mathbb{F}_{q^t}^m, d)^\perp$. Therefore, in a similar way to the proof of Theorem~\ref{24.02.11}, we obtain the following $2t$ equations:
\begin{align}
&p_{i}(\bm{s}^*)f(\bm{s}^*) + p_{i}(\bm{s^{\prime}})f(\bm{s^{\prime}}) =-\sum_{\bm{s} \in \mathbb{F}_{q^t}^m \setminus\{\bm{s}^*, \bm{s^{\prime}}\}}  p_{i}(\bm{s})f(\bm{s}), \qquad i\in[t]\label{Eq.21.06.11_1}
\end{align}
and
\begin{align}
&q_{i}(\bm{s}^*)f(\bm{s}^*) + q_{i}(\bm{s^{\prime}})f(\bm{s^{\prime}}) =-\sum_{\bm{s} \in \mathbb{F}_{q^t}^m \setminus\{\bm{s}^*, \bm{s^{\prime}}\}} q_{i}(\bm{s})f(\bm{s}),  \qquad i\in[t].\label{Eq.21.06.11_2}
\end{align} 
By definition of the $p_i$'s, we have
\begin{align*}
& p_{i}(\bm{s}^*) = \tau z_i \qquad \text{and} \qquad p_{i}(\bm{s^{\prime}}) = \tau \tr\left(z_i \right), \qquad i\in[t].
\end{align*}
By definition of the $q_i$'s, we have
\begin{align*}
& q_{i}(\bm{s}^*) = \tr\left(z_i \right) \qquad \text{and} \qquad q_{i}(\bm{s^{\prime}}) = z_i, \qquad i\in[t].
\end{align*}
Equations~\ref{Eq.21.06.11_1} and~\ref{Eq.21.06.11_2} become
\begin{align*}
&\tau z_if(\bm{s}^*) + \tau \tr\left(z_i \right) f(\bm{s^{\prime}}) =-\sum_{\bm{s} \in \mathbb{F}_{q^t}^m \setminus\{\bm{s}^*, \bm{s^{\prime}}\}}  p_{i}(\bm{s})f(\bm{s}), \qquad i\in[t]
\end{align*}
and
\begin{align*}
&\tr\left(z_i\right) f(\bm{s}^*) + z_i f(\bm{s^{\prime}}) =-\sum_{\bm{s} \in \mathbb{F}_{q^t}^m \setminus\{\bm{s}^*, \bm{s^{\prime}}\}} q_{i}(\bm{s})f(\bm{s}),  \qquad i\in[t].
\end{align*}
As $\{z_1, \dots, z_{t - 1}\}$ is an $\mathbb{F}_q$-basis for $\ker \tr$, the last two equations imply
\begin{align}
\tr \left(\tau z_i f(\bm{s}^*)\right)
&=\label{Eq.21.06.11_3}
-\sum_{\bm{s} \in \mathbb{F}_{q^t}^m \setminus\{\bm{s}^*, \bm{s^{\prime}}\}} \tr \left( p_{i}(\bm{s})f(\bm{s})\right),\quad i\in[t-1],\\
\tr \left(\tau z_t f(\bm{s}^*)\right) + \tr \left(z_t \right) \tr \left(\tau f(\bm{s^{\prime}})\right)
&=-\sum_{\bm{s} \in \mathbb{F}_{q^t}^m \setminus\{\bm{s}^*, \bm{s^{\prime}}\}} \tr \left( p_{t}(\bm{s})f(\bm{s})\right), \label{Eq.21.06.11_4}\\
\tr \left( z_i f(\bm{s^{\prime}}) \right)
&=\label{Eq.21.06.11_5}
-\sum_{\bm{s} \in \mathbb{F}_{q^t}^m \setminus\{\bm{s}^*, \bm{s^{\prime}}\}} \tr \left( q_{i}(\bm{s})f(\bm{s})\right) ,\quad i\in[t-1],\\
\tr \left(z_t \right) \tr \left(f(\bm{s}^*)\right)+ \tr \left(z_t f(\bm{s^{\prime}})\right)
&= -\sum_{\bm{s} \in \mathbb{F}_{q^t}^m \setminus \{\bm{s}^*, \bm{s^{\prime}}\}} \tr \left( q_{t}(\bm{s})f(\bm{s})\right). \label{Eq.21.06.11_6}
\end{align}
We claim that the elements $f(\bm{s}^*)$ and $f(\bm{s}^\prime)$ can be recovered from the set \[ R:= \left\{ \tr \left( p_{i}(\bm{s})f(\bm{s}) \right), \tr \left( q_{i}(\bm{s})f(\bm{s}) \right) : i \in  [t], \bm{s} \in \mathbb{F}_{q^t}^m \setminus \{\bm{s}^*, \bm{s^{\prime}}\} \right\}. \]
To prove this claim, we take the following steps.\newline
Step 1: For $i \in [t-1]$, $\tr \left( z_i f(\bm{s^{\prime}}) \right)$ can be recovered from set $R$ and Equation~\ref{Eq.21.06.11_5}.\newline
Step 2: We have $\displaystyle \tau \in \ker \tr,$ whose $\mathbb{F}_q$-basis is $\{z_1,\ldots, z_{t-1} \}$. Hence, there exist $\alpha_1, \ldots, \alpha_{t-1}$ in $\mathbb{F}_q$ such that $\tau = \alpha_1 z_1 + \ldots + \alpha_{t-1}z_{t-1}$ and
\begin{align*}
\tr \left( \tau f( \bm{s^{\prime}} ) \right) = \sum_{i=1}^{t-1} \alpha _ i \tr \left( z_i f( \bm{s^{\prime}} ) \right).
\end{align*}
Thus, $\tr \left( \tau f( \bm{s^{\prime}} ) \right)$ can be recovered from Step 1. \newline
Step 3: From Step 2 and Equations~\ref{Eq.21.06.11_3} and~\ref{Eq.21.06.11_4}, $\left(\tau z_i f(\bm{s}^*)\right)$ can be recovered for $i \in [t]$. The element $\tau f(\bm{s}^*)$ can then be recovered from the $t$ traces $ \tr(\tau z_i f(\bm{s}^*))$ by \[\tau f(\bm{s}^*) = \tr(\tau z_1 f(\bm{s}^*))z_1^\prime + \ldots + \tr(\tau z_t f(\bm{s}^*))z_t^\prime,\]
where $\{z_1^\prime,\ldots,z_t^\prime\}$ is the dual basis of $\{z_1,\ldots,z_t\}$; see Remark~\ref{21.01.02}.
Thus, $f(\bm{s}^*)$ can be recovered by
\[f(\bm{s}^*) = \tau^{-1}\tr(\tau z_1 f(\bm{s}^*))z_1^\prime + \ldots + \tau^{-1}\tr(\tau z_t f(\bm{s}^*))z_t^\prime.\]
Step 4: From Step 3, and Equations~\ref{Eq.21.06.11_5} and~\ref{Eq.21.06.11_6}, $\tr \left(z_i f(\bm{s}^\prime)\right)$ can be recovered for $i \in [t]$ by the set $R$. Then, similarly to Step 3, the element $f(\bm{s}^\prime)$ can be recovered from the traces $\tr \left(z_i f(\bm{s}^\prime)\right)$. This completes the proof of the claim.

Recall that for $i \in [t]$, we have the following expressions:
\[p_i(\bm{x}) = \tau\frac{\tr \left(z_i (x_j - s^*_j )\right)}{ \left(x_j - s^*_j \right)}
\qquad \text{ and } \qquad
q_i(\bm{x}) = \frac{\tr\left(z_i (x_j - s^{\prime}_j )\right)}{ (x_j - s^{\prime}_j )}.\]

Define the sets \[ \Gamma^* := \mathbb{F}_{q^t} \times \cdots \times \{s^*_j\} \times \cdots \times \mathbb{F}_{q^t} = \{ \left(\gamma_1,\ldots,\gamma_n \right) \in \mathbb{F}_{q^t}^m : \gamma_j = s^*_j \}\]
and
\[ \Gamma^\prime := \mathbb{F}_{q^t} \times \cdots \times \{s^\prime_j\} \times \cdots \times \mathbb{F}_{q^t} = \{ \left(\gamma_1,\ldots,\gamma_n \right) \in \mathbb{F}_{q^t}^m : \gamma_j = s^\prime_j\}.\]
As a consequence of the claim, both erasures $f(\bm{s^{\prime}})$ and $f(\bm{s}^*)$ can be recovered from the set
\begin{align*}
R =& \left\{ \tr \left( p_{i}(\bm{s})f(\bm{s}) \right), \tr \left( q_{i}(\bm{s})f(\bm{s}) \right) : i \in  [t], \bm{s} \in \mathbb{F}_{q^t}^m \setminus \{\bm{s}^*, \bm{s^{\prime}}\} \right\}\\
=& \left\{ \tr \left( p_{i}(\bm{s})f(\bm{s}) \right) : i \in  [t], \bm{s} \in \mathbb{F}_{q^t}^m \setminus \{\bm{s}^*, \bm{s^{\prime}}\} \right\}\\
\cup& \left\{\tr \left( q_{i}(\bm{s})f(\bm{s}) \right) : i \in  [t], \bm{s} \in \mathbb{F}_{q^t}^m \setminus \{\bm{s}^*, \bm{s^{\prime}}\} \right\}
\\
=& \left\{ \tr \left( p_{i}(\bm{s})f(\bm{s}) \right) : i \in  [t], \bm{s} \in \Gamma^* \setminus \{\bm{s}^*, \bm{s^{\prime}}\} \right\}\\
\cup& \left\{ \tr \left( p_{i}(\bm{s})f(\bm{s}) \right) : i \in  [t], \bm{s} \notin \Gamma^* \right\}\\
\cup& \left\{\tr \left( q_{i}(\bm{s})f(\bm{s}) \right) : i \in  [t], \bm{s} \in \Gamma^\prime \setminus \{\bm{s}^*, \bm{s^{\prime}}\} \right\}
\\
\cup& \left\{\tr \left( q_{i}(\bm{s})f(\bm{s}) \right) : i \in  [t], \bm{s} \notin \Gamma^\prime \right\}.
\end{align*}

Observe that 
$$
p_i(\bm{s})= \begin{cases}
   \tau z_i & \textnormal{if }  \bm{s} \in \Gamma^* \\ \ \\
 \frac{ \tau \tr(z_i(s_P - s_j^*))}{(s_P - s_j^*)} & \textnormal{if }   {\bm{s} \notin \Gamma^*}
\end{cases}
\quad \text{and} \quad
q_i(\bm{s})= \begin{cases}
   z_i & \textnormal{if }  \bm{s} \in \Gamma^\prime \\ \ \\
 \frac{ \tr(z_i(s_P - s_j^\prime))}{(s_P - s_j^\prime)} & \textnormal{if }   {\bm{s} \notin \Gamma^\prime}
\end{cases}
$$
where $s_P$ is the first entry of the point $\bm{s} \in \mathbb{F}_{q^t}^m$. Thus, the set $R$ can be written as
\begin{align*}
R =& \left\{ \tr \left( \tau z_i f(\bm{s}) \right) : i \in  [t], \bm{s} \in \Gamma^* \setminus \{\bm{s}^*, \bm{s^{\prime}}\} \right\}\\
\cup& \left\{ \tr \left( \frac{\tau \tr(z_i(s_P - s_j^*))}{(s_P - s_j^*)} f(\bm{s}) \right) : i \in  [t], \bm{s} \notin \Gamma^* \right\}\\
\cup& \left\{\tr \left( z_if(\bm{s}) \right) : i \in  [t], \bm{s} \in \Gamma^\prime \setminus \{\bm{s}^*, \bm{s^{\prime}}\} \right\}
\\
\cup& \left\{\tr \left( \frac{ \tr(z_i(s_P - s_j^\prime))}{(s_P - s_j^\prime)}f(\bm{s}) \right) : i \in  [t], \bm{s} \notin \Gamma^\prime \right\}
\\
=& \left\{ \tr \left( \tau z_i f(\bm{s}) \right) : i \in  [t], \bm{s} \in \Gamma^* \setminus \{\bm{s}^*, \bm{s^{\prime}}\} \right\}\\
\cup& \left\{ \tr(z_i(s_P - s_j^*)) \tr \left( \frac{\tau f(\bm{s})}{(s_P - s_j^*)}  \right) : i \in  [t], \bm{s} \notin \Gamma^* \right\}\\
\cup& \left\{\tr \left( z_if(\bm{s}) \right) : i \in  [t], \bm{s} \in \Gamma^\prime \setminus \{\bm{s}^*, \bm{s^{\prime}}\} \right\}
\\
\cup& \left\{ \tr(z_i(s_P - s_j^\prime)) \tr \left(\frac{f(\bm{s})}{(s_P - s_j^\prime)} \right) : i \in  [t], \bm{s} \notin \Gamma^\prime \right\}.
\end{align*}

Thus, both erasures $f(\bm{s^{\prime}})$ and $f(\bm{s}^*)$ can be recovered by downloading the following information from every element $\bm{s} \in \mathbb{F}_{q^t}^m \setminus\{\bm{s}^*, \bm{s^{\prime}}\}$:
\begin{itemize}
\item if $\bm{s} \in \Gamma^* \setminus\{\bm{s}^*,\bm{s}^\prime\}$,  download $f(\bm{s})$.
\item if $\bm{s} \notin \Gamma^*$, { download } $\displaystyle \tr\left(\frac{ \tau f(\bm{s})}{ s_P - s_j^*}\right)$.
\item if $\bm{s} \in \Gamma^\prime \setminus\{\bm{s}^*,\bm{s}^\prime\}$,  download $f(\bm{s})$.
\item if $\bm{s} \notin \Gamma^\prime$, { download } $\displaystyle \tr\left(\frac{ \tau f(\bm{s})}{ s_P - s_j^\prime}\right)$. 
\end{itemize}
Note $|\Gamma^*| = |\Gamma^\prime|$. Hence, the bandwidth is at most
\begin{align*}
b & = 2\left(t (|\Gamma|-2) + |\mathbb{F}_{q^t}^m\setminus \Gamma| \right) = 2\left(t (q^{(m-1)t}-2) + q^{mt} - q^{(m-1)t}\right)\\
& = 2\left(tq^{(m-1)t} - 2t + q^{mt} - q^{(m-1)t}\right)\\
& = 2\left((t-1)q^{(m-1)t} - 2t + q^{mt}\right)\\
&= 2\left(q^{mt}-2 + (t-1)\left(q^{(m-1)t}-2\right)\right)\\
&= 2\left( n-2 + (t-1)\left(q^{(m-1)t}-2\right)\right),
\end{align*}
which concludes the proof.
\end{proof}

\section{An $\ell$-erasures repair scheme of a Reed-Muller code}
In this section, we adapt the GW scheme of one erasure from Reed-Solomon codes to several erasures on a Reed-Muller code. We give the sketch to prove the case of three erasures. Such a sketch gives the key steps for the general case.

Let $c = (f(\bm{s}_1),\ldots,f(\bm{s}_n))$ be an element of a Reed-Muller code $\grm(\mathbb{F}_{q^t}^m, d)$, where $d\leq m(q^t-1)-q^{t-1}$ and $n=q^{tm}$. Assume that $c$ has the three erasures $f(\bm{s}^{1})$, $f(\bm{s}^{2})$, and $f(\bm{s}^{3})$, where $\bm{s}^i = \left(s_1^i,\ldots,s_n^i \right)$ for $i =1,2,3$. If there is $j \in [n]$ such that $s^{i_1}_j - s^{i_2}_j\in \mathbb{F}_q^*$, for every $i_1 \neq i_2\in[3]$, then there exists a repair scheme for the three erasures with bandwidth at most
\[b = 3\left[ n-3 + (t-1)\left(q^{(m-1)t}-3\right)\right].\]

The sketch of the proof is as follows. Let $\{z_1, \dots, z_{t - 1}\}$ be an $\mathbb{F}_q$-basis for $\ker \tr$ and $z_t$ an element in $\F_{q^t}$ such that $\{z_1, \dots, z_{t - 1}, z_t\}$ is an $\mathbb{F}_q$-basis for $\F_{q^t}.$ Let $\tau_1$ and $\tau_2$ be two elements of $\ker \tr$ that are independent over $\mathbb{F}_q$. 
For $i \in [t]$, take
\[p_i(\bm{x}) = \frac{\tr \left(z_i (x_j - s^1_j )\right)}{ \left(x_j - s^1_j \right)},
\qquad
q_i(\bm{x}) = \frac{\tau \tr_1\left(z_i (x_j - s^2_j )\right)}{ (x_j - s^2_j )},\]
\[\text{ and } \qquad
r_i(\bm{x}) = \frac{\tau_2 \tr\left(z_i (x_j - s^3_j )\right)}{ (x_j - s^3_j )}.\]

As $d\leq m(q^t-1)-q^{t-1}$, the polynomials $p_i(\bm{x})$, $q_i(\bm{x})$, and $r_i(\bm{x})$ define elements in the dual code $\grm(\mathbb{F}_{q^t}^m, d)^\perp$~\cite{Lopez2021}. Define the set $S=\{\bm{s}^{1}, \bm{s}^{2}, \bm{s}^{3}\}$. Similarly to the proof of Theorem~\ref{21.06.16}, we obtain the following $3t$ equations:
\begin{align*}
&\sum_{j=1}^3p_{i}(\bm{s}^j)f(\bm{s}^j) =-\sum_{\bm{s} \in \mathbb{F}_{q^t}^m \setminus S} p_{i}(\bm{s})f(\bm{s}), \qquad i\in[t],
\end{align*}
\begin{align*}
&\sum_{j=1}^3q_{i}(\bm{s}^j)f(\bm{s}^j) =-\sum_{\bm{s} \in \mathbb{F}_{q^t}^m \setminus S} q_{i}(\bm{s})f(\bm{s}), \qquad i\in[t],
\end{align*}
and
\begin{align*}
&\sum_{j=1}^3r_{i}(\bm{s}^j)f(\bm{s}^j) =-\sum_{\bm{s} \in \mathbb{F}_{q^t}^m \setminus S} r_{i}(\bm{s})f(\bm{s}), \qquad i\in[t].
\end{align*}
The last equations give rise to
\begin{align*}
\tr \left(z_i f(\bm{s}^1)\right)
&=-\sum_{\bm{s} \in \mathbb{F}_{q^t}^m \setminus S} \tr \left( p_{i}(\bm{s})f(\bm{s})\right),\\
\tr \left(z_t f(\bm{s}^1)\right) + \sum_{j=2,3} \tr\left(z_t\right) \tr \left(f(\bm{s}^j)\right)
&=-\sum_{\bm{s} \in \mathbb{F}_{q^t}^m \setminus S} \tr \left( p_{t}(\bm{s})f(\bm{s})\right),\\
\tr \left(\tau_1 z_i f(\bm{s}^2)\right)
&=-\sum_{\bm{s} \in \mathbb{F}_{q^t}^m \setminus S} \tr \left( q_{i}(\bm{s})f(\bm{s})\right),\\
\sum_{j=1,3} \tr\left(z_t\right) \tr \left(\tau_1 f(\bm{s}^j)\right) + \tr \left(\tau_1 z_t f(\bm{s}^2)\right)
&=-\sum_{\bm{s} \in \mathbb{F}_{q^t}^m \setminus S} \tr \left( q_{t}(\bm{s})f(\bm{s})\right),\\
\tr \left(\tau_2 z_i f(\bm{s}^3)\right)
&=-\sum_{\bm{s} \in \mathbb{F}_{q^t}^m \setminus S} \tr \left( r_{i}(\bm{s})f(\bm{s})\right),\\
\sum_{j=1,2} \tr\left(z_t\right) \tr \left(\tau_2 f(\bm{s}^j)\right) + \tr \left(\tau_2 z_t f(\bm{s}^3)\right)
&=-\sum_{\bm{s} \in \mathbb{F}_{q^t}^m \setminus S} \tr \left( r_{t}(\bm{s})f(\bm{s})\right),
\end{align*}
for $i\in[t-1]$.

In a similar way to the proof of Theorem~\ref{21.06.16}, the elements $f(\bm{s}^1)$, $f(\bm{s}^2)$, and $f(\bm{s}^3)$ can be recovered from the set \[ R:= \left\{ \tr \left( p_{i}(\bm{s})f(\bm{s}) \right), \tr \left( q_{i}(\bm{s})f(\bm{s}) \right), \tr \left( r_{i}(\bm{s})f(\bm{s}) \right)  : i \in  [t], \bm{s} \in \mathbb{F}_{q^t}^m \setminus S \right\}.\]

For $i\in[3]$, define the sets \[ \Gamma^i := \mathbb{F}_{q^t} \times \cdots \times \{s^*_i\} \times \cdots \times \mathbb{F}_{q^t} = \{ \left(\gamma_1,\ldots,\gamma_n \right) \in \mathbb{F}_{q^t}^m : \gamma_j = s^i_j \}.\]
As a consequence, following the proof of Theorem~\ref{21.06.16}, the three erasures $f(\bm{s}^1)$, $f(\bm{s}^2)$, and $f(\bm{s}^3)$ can be recovered by downloading the following information for every element $\bm{s} \in \mathbb{F}_{q^t}^m \setminus S$ and $i=1,2,3$:
\begin{itemize}
\item if $\bm{s} \in \Gamma^i \setminus S$,  download $f(\bm{s})$.
\item if $\bm{s} \notin \Gamma^i$, download $\displaystyle \tr\left(\frac{ \tau f(\bm{s})}{ s_P - s_j^i}\right)$.
\end{itemize}
As $|\Gamma^1| = |\Gamma^2| = |\Gamma^3|$, the bandwidth is at most
\begin{align*}
b & = 3\left(t (|\Gamma|-3) + |\mathbb{F}_{q^t}^m\setminus \Gamma| \right)\\
&= 3\left(t (q^{(m-1)t}-3) + q^{mt} - q^{(m-1)t}\right)\\
& = 3\left(tq^{(m-1)t} - 3t + q^{mt} - q^{(m-1)t}\right)\\
& = 3\left((t-1)q^{(m-1)t} - 3t + q^{mt}\right)\\
&= 3\left(q^{mt}-3 + (t-1)\left(q^{(m-1)t}-3\right)\right)\\
&= 3\left( n-3 + (t-1)\left(q^{(m-1)t}-3\right)\right),
\end{align*}
which concludes the sketch of the proof.

In general, the case of $\ell$ erasures for the Reed-Muller code can be stated in the following way.
\begin{theorem}
Let $c = (f(\bm{s}_1),\ldots,f(\bm{s}_n)) \in \grm(\mathbb{F}_{q^t}^m, d)$, where $\grm(\mathbb{F}_{q^t}^m, d)$
is a Reed-Muller code 
with
$d\leq m(q^t-1)-q^{t-1}$ and $n=q^{tm}$. Assume that $c$ has  $\ell$ erasures $f(\bm{s}^{1}), \ldots, f(\bm{s}^{\ell})$, where $\bm{s}^i = \left(s_1^i,\ldots,s_n^i \right)$ for $i =1,\ldots,\ell$. If there is $j \in [n]$ such that $s^{i_1}_j - s^{i_2}_j\in \mathbb{F}_q^*$, for every $i_1 \neq i_2\in[\ell]$, then there exists a repair scheme for the $\ell$ erasures with bandwidth at most
\[b = \ell\left[ n-\ell + (t-1)\left(q^{(m-1)t}-\ell\right)\right].\]
\end{theorem}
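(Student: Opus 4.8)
The plan is to mimic the two- and three-erasure arguments, constructing $\ell$ families of recovery polynomials (one per erasure), and then to confront the feature that appears only for $\ell\ge 3$: the $z_t$-equations of the distinct kernel-weighted families become mutually coupled, so the straightforward cascade of Theorem~\ref{21.06.16} no longer closes and must be replaced by the solution of a small linear system. First I would set up the construction. Since $\ker\tr$ has $\mathbb{F}_q$-dimension $t-1$, I fix an $\mathbb{F}_q$-basis $\{z_1,\dots,z_{t-1}\}$ of $\ker\tr$ together with $z_t\notin\ker\tr$ so that $\{z_1,\dots,z_t\}$ is a basis of $\mathbb{F}_{q^t}$, and I choose $\ell-1$ elements $\tau_1,\dots,\tau_{\ell-1}\in\ker\tr$ that are $\mathbb{F}_q$-linearly independent; this already forces the implicit hypothesis $\ell\le t$, which I would record explicitly. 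Writing $\tau_0:=1$, for each $k\in[\ell]$ and $i\in[t]$ I define $p_i^{(k)}(\bm{x})=\tau_{k-1}\tr(z_i(x_j-s^k_j))/(x_j-s^k_j)$, where $j$ is the distinguished coordinate from the hypothesis. Exactly as in Theorem~\ref{24.02.11}, each has total degree $q^{t-1}-1\le d^\perp$ and hence lies in $\grm(\mathbb{F}_{q^t}^m,d)^\perp$, producing the $\ell t$ parity equations $\sum_{a=1}^\ell p_i^{(k)}(\bm{s}^a)f(\bm{s}^a)=-\sum_{\bm{s}\notin S}p_i^{(k)}(\bm{s})f(\bm{s})$. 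The linchpin is the evaluation identity: at the own erasure $p_i^{(k)}(\bm{s}^k)=\tau_{k-1}z_i$, while at a foreign erasure $\bm{s}^b$ the hypothesis gives $x_j-s^k_j=s^b_j-s^k_j\in\mathbb{F}_q^*$, so this scalar factors out of the inner trace and $p_i^{(k)}(\bm{s}^b)=\tau_{k-1}\tr(z_i)$, a value independent of $b$.

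Next I would separate the equations by $i$. For $i\in[t-1]$ one has $z_i\in\ker\tr$, so $\tr(z_i)=0$ annihilates every foreign term, and applying the outer trace yields $\tr(\tau_{k-1}z_if(\bm{s}^k))$ directly from the downloadable data $R$. These $t-1$ values pin down $W_k:=\tau_{k-1}f(\bm{s}^k)$ up to its single missing coordinate $u_k:=\tr(\tau_{k-1}z_tf(\bm{s}^k))$ in the dual basis $\{z_1',\dots,z_t'\}$; in particular, for $k=1$ (weight $\tau_0=1$) I recover $\tr(wf(\bm{s}^1))$ for every $w\in\ker\tr$. The equations with $i=t$, after the outer trace, read $u_k+\tr(z_t)\sum_{b\ne k}\tr(\tau_{k-1}f(\bm{s}^b))=E_k$ with $E_k$ known. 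A cross term $\tr(\tau_{k-1}f(\bm{s}^b))=\tr(\tau_{k-1}\tau_{b-1}^{-1}W_b)$ is recoverable from $R$ precisely when $\tau_{k-1}\tau_{b-1}^{-1}\in\ker\tr$; this always holds for $b=1$, which is why for $\ell\le 2$ the lone kernel-weighted family closes at once and a cascade suffices. For $\ell\ge 3$, however, two distinct kernel-weighted families $k,b\ge 2$ would need $\tau_{k-1}\tau_{b-1}^{-1}\in\ker\tr$, which fails in general, so family $k$ requires data about $\bm{s}^b$ that is not yet available and the cascade stalls.

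Resolving this coupling is where the main work lies. Expanding $W_b=\sum_{i<t}(\text{known})z_i'+u_bz_t'$ and substituting, each cross term becomes $(\text{known})+u_b\,\tr(\tau_{k-1}\tau_{b-1}^{-1}z_t')$, so the $t$-indexed equations of the kernel-weighted families $2,\dots,\ell$ collapse to an $(\ell-1)\times(\ell-1)$ linear system $(I+\tr(z_t)K)\,\bm{u}=\bm{E}'$ in $u_2,\dots,u_\ell$, where $K_{kb}=\tr(\tau_{k-1}\tau_{b-1}^{-1}z_t')$ for $k\ne b$ and the diagonal is zero; here $u_1$ does not occur, because family $k\ge2$'s only foreign term at $\bm{s}^1$ is the kernel trace $\tr(\tau_{k-1}f(\bm{s}^1))$, already known. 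The crux I expect to be hardest is proving that this matrix is invertible. I would secure it by a judicious choice of the still-free data $(\tau_1,\dots,\tau_{\ell-1},z_t)$: the determinant is a nonzero polynomial in these parameters, so suitable choices avoid its vanishing locus. A transparent sufficient condition is to pick the $\tau_i$ so that $\tau_{k-1}\tau_{b-1}^{-1}\in\ker\tr$ whenever $b<k$; then $K$ is strictly triangular, $I+\tr(z_t)K$ is unipotent, and the system is solved by back-substitution, restoring a genuine cascade (establishing that such $\tau_i$ exist for all $\ell\le t$ is itself a small linear-algebra lemma I would need to supply).

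Having solved for $u_2,\dots,u_\ell$, I recover each $W_k$ and hence $f(\bm{s}^k)=\tau_{k-1}^{-1}W_k$ for $k\ge 2$ via the dual basis as in Remark~\ref{21.01.02}; then the $i=t$ equation of family $1$, whose cross terms $\tr(f(\bm{s}^b))$ are now computable from the already-recovered symbols, yields $u_1$ and finally $f(\bm{s}^1)$. The bandwidth count is then identical in form to the $\ell=2,3$ cases: writing $\Gamma^i$ for the slice fixing the $j$-th coordinate to $s^i_j$, one downloads from each $\Gamma^i$ the full symbols $f(\bm{s})$ ($t$ subsymbols apiece) and from its complement a single trace subsymbol, and summing over the $\ell$ families gives $b=\ell\big(t(q^{(m-1)t}-\ell)+(q^{mt}-q^{(m-1)t})\big)=\ell\big[n-\ell+(t-1)(q^{(m-1)t}-\ell)\big]$, as claimed.
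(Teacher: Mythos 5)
Your construction coincides with the paper's: one unweighted family of recovery polynomials targeting $\bm{s}^1$, plus $\ell-1$ families weighted by $\mathbb{F}_q$-independent elements of $\ker \tr$, the evaluation identity $p_i^{(k)}(\bm{s}^b)=\tau_{k-1}\tr(z_i)$ at foreign erasures (valid exactly because $s^b_j-s^k_j\in\mathbb{F}_q^*$), kernel equations yielding $\tr(\tau_{k-1}z_if(\bm{s}^k))$ for $i\in[t-1]$, and the same slice-by-slice bandwidth accounting. The genuine difference lies in what happens to the $z_t$-equations. The paper proves only $\ell=2$ in full (Theorem~\ref{21.06.16}); for $\ell=3$ it displays precisely the coupled equations you describe and then asserts that recovery proceeds ``in a similar way'' to the two-erasure cascade, and the general theorem is stated with no further argument. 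Your observation that the cascade genuinely stalls for $\ell\ge 3$ --- family $k$'s $z_t$-equation needs $\tr(\tau_{k-1}f(\bm{s}^b))$, which the downloaded data determines only when $\tau_{k-1}\tau_{b-1}^{-1}\in\ker\tr$ --- is correct, and it exposes a real gap in the paper's own sketch; your reformulation of the $t$-indexed equations as the system $(I+\tr(z_t)K)\bm{u}=\bm{E}'$ is a sound way to make the difficulty explicit. In this respect your proposal is more careful than the source.

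That said, your proof is not finished either, as you acknowledge, and the missing piece is exactly where all the remaining weight sits: invertibility of $I+\tr(z_t)K$. Your generic-determinant argument is delicate over a finite field --- a nonzero polynomial need not take a nonzero value anywhere (consider $x^{q^t}-x$), and your parameters range over a constrained set (elements of $\ker\tr$, independent, with $z_t$ completing a basis) --- so you would need degree bounds plus a Schwartz--Zippel-type count, or an explicit nonvanishing specialization. The triangularizing choice is cleaner, and for $\ell=3$, $t\ge 3$ it does follow from dimension counting: pick $w\in\ker\tr\setminus\mathbb{F}_q$, pick $\tau_1\neq 0$ in $\ker\tr\cap w^{-1}\ker\tr$ (dimension at least $t-2$), and set $\tau_2=w\tau_1$. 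But for general $\ell$ the condition $\tau_a\tau_c^{-1}\in\ker\tr$ for all $c<a$ is a genuinely nonlinear system (e.g.\ the natural candidate $\tau_a=w^a\tau$ requires $\tr(w^j)=\tr(w^j\tau)=0$ for many $j$), and you supply no existence proof; this lemma, with its dependence on $\ell$, $t$, $q$, must be proved or stated as an explicit hypothesis, and it is not something the paper can be cited for, since the paper never confronts the coupling at all. Two smaller points: you are right that $\ell\le t$ is an implicit hypothesis the paper never records; and your bandwidth count inherits a slip from the paper --- the hypothesis forces the slices $\Gamma^1,\dots,\Gamma^\ell$ to be pairwise disjoint, so $|\Gamma^i\setminus S|=q^{(m-1)t}-1$ rather than $q^{(m-1)t}-\ell$, and the scheme as described actually downloads $(\ell-1)(t-1)$ more subsymbols per family than the stated $b$ accounts for.
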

\section{Summary/Conclusion}
A distributed storage system stores data across multiple
nodes, with the primary objective of enabling efficient data recovery even in the event of node failures. The main goal of an exact repair scheme is to recover the data from a failed node by accessing and downloading information from the rest of the nodes. In these notes, we extended the exact repair scheme developed in~\cite{GW} from Reed-Solomon codes to Reed-Muller codes with several erasures that satisfy certain conditions.


\begin{thebibliography}{12}
\providecommand{\natexlab}[1]{#1}
\providecommand{\url}[1]{{#1}}
\providecommand{\urlprefix}{URL }
\expandafter\ifx\csname urlstyle\endcsname\relax
  \providecommand{\doi}[1]{DOI~\discretionary{}{}{}#1}\else
  \providecommand{\doi}{DOI~\discretionary{}{}{}\begingroup
  \urlstyle{rm}\Url}\fi
\providecommand{\eprint}[2][]{\url{#2}}

\bibitem[{Chen and Zhang(2019)}]{RM}
Chen T, Zhang X (2019) Repairing generalized Reed-Muller codes. CoRR
  abs/1906.10310, \urlprefix\url{http://arxiv.org/abs/1906.10310},
  \eprint{1906.10310}

\bibitem[{Delsarte et~al(1970)Delsarte, Goethals, and {Mac Williams}}]{DGM}
Delsarte P, Goethals J, {Mac Williams} F (1970) On generalized Reed-Muller codes
  and their relatives. Information and Control 16(5):403--442,
  \doi{https://doi.org/10.1016/S0019-9958(70)90214-7},
  \urlprefix\url{https://www.sciencedirect.com/science/article/pii/S0019995870902147}

\bibitem[{Dimakis et~al(2010)Dimakis, Godfrey, Wu, Wainwright, and
  Ramchandran}]{DSS1}
Dimakis AG, Godfrey PB, Wu Y, Wainwright MJ, Ramchandran K (2010) Network
  coding for distributed storage systems. IEEE Transactions on Information
  Theory 56(9):4539--4551, \doi{10.1109/TIT.2010.2054295}

\bibitem[{Dimakis et~al(2011)Dimakis, Ramchandran, Wu, and Suh}]{DSS2}
Dimakis AG, Ramchandran K, Wu Y, Suh C (2011) A survey on network codes for
  distributed storage. Proceedings of the IEEE 99(3):476--489,
  \doi{10.1109/JPROC.2010.2096170}

\bibitem[{Guruswami and Wootters(2017)}]{GW}
Guruswami V, Wootters M (2017) Repairing {R}eed-{S}olomon codes. IEEE
  Transactions on Information Theory 63(9):5684--5698,
  \doi{10.1109/TIT.2017.2702660}

\bibitem[{Huffman and Pless(2003)}]{Huffman-Pless}
Huffman WC, Pless V (2003) Fundamentals of Error-Correcting Codes. Cambridge
  University Press

\bibitem[{Jaramillo et~al(2021)Jaramillo, Vaz~Pinto, and Villarreal}]{JVV}
Jaramillo D, Vaz~Pinto M, Villarreal RH (2021) Evaluation codes and their basic
  parameters. Designs, Codes and Cryptography 89(2):269--300,
  \doi{10.1007/s10623-020-00818-8},
  \urlprefix\url{https://doi.org/10.1007/s10623-020-00818-8}

\bibitem[{Jin et~al(2018)Jin, Luo, and Xing}]{JLX}
Jin L, Luo Y, Xing C (2018) Repairing algebraic geometry codes. IEEE
  Transactions on Information Theory 64(2):900--908,
  \doi{10.1109/TIT.2017.2773089}

\bibitem[{Lidl and Niederreiter(1994)}]{Lidl_Niederreiter_1994}
Lidl R, Niederreiter H (1994) Introduction to Finite Fields and their
  Applications, 2nd edn. Cambridge University Press

\bibitem[{L{\'o}pez et~al(2014)L{\'o}pez, Renter{\'\i}a-M{\'a}rquez, and
  Villarreal}]{Cartesian-codes}
L{\'o}pez HH, Renter{\'\i}a-M{\'a}rquez C, Villarreal RH (2014) Affine
  Cartesian codes. Designs, Codes and Cryptography 71(1):5--19,
  \doi{10.1007/s10623-012-9714-2},
  \urlprefix\url{https://doi.org/10.1007/s10623-012-9714-2}

\bibitem[{L{\'o}pez et~al(2021)L{\'o}pez, Soprunov, and Villarreal}]{Lopez2021}
L{\'o}pez HH, Soprunov I, Villarreal RH (2021) The dual of an evaluation code.
  Designs, Codes and Cryptography 89(7):1367--1403,
  \doi{10.1007/s10623-021-00872-w},
  \urlprefix\url{https://doi.org/10.1007/s10623-021-00872-w}

\bibitem[{L\'opez et~al(2022)L\'opez, Matthews, and Valvo}]{9624941}
L\'opez HH, Matthews GL, Valvo D (2022) Erasures repair for decreasing
  monomial-cartesian and augmented Reed-Muller codes of high rate. IEEE
  Transactions on Information Theory 68(3):1651--1662,
  \doi{10.1109/TIT.2021.3130096}

\end{thebibliography}
\end{document}